\documentclass[11pt,a4paper]{article}

\usepackage{times}
\usepackage{soul}
\usepackage{url}
\usepackage[hidelinks]{hyperref}
\usepackage[utf8]{inputenc}
\usepackage[small]{caption}
\usepackage{graphicx}
\usepackage{amsmath,amssymb,amsfonts,amsthm}
\usepackage{mathtools}
\usepackage{amsthm}
\usepackage{booktabs}
\usepackage{algorithm}
\usepackage{xspace}
\usepackage{authblk}

\usepackage[switch]{lineno}
\usepackage[noend]{algpseudocode}

\usepackage{cancel}

\usepackage{tikz}
\usepackage{todonotes}
\usetikzlibrary{shapes,shapes.callouts,shadows,arrows,backgrounds,%
matrix,patterns,arrows,decorations.pathmorphing,decorations.pathreplacing,%
positioning,fit,calc,decorations.text%
}

\theoremstyle{plain}
    \newtheorem{theorem}{Theorem}
    
    \newtheorem{lemma}[theorem]{Lemma}

    \newtheorem*{theorem*}{Theorem}
    \newtheorem*{corollary*}{Corollary}
    \newtheorem*{lemma*}{Lemma}
    \newtheorem*{proposition*}{Proposition}
    \newtheorem*{claim*}{Claim}
    \newtheorem*{conjecture*}{Conjecture}

\theoremstyle{definition}

    \newtheorem*{definition*}{Definition}
    \newtheorem*{example*}{Example}
        \newtheorem*{obs*}{Observation}

\theoremstyle{remark}

    \newtheorem*{remark*}{Remark}
    \newtheorem*{note*}{Note}

\DeclarePairedDelimiter\abs{\lvert}{\rvert}

\mathchardef\mhyphen="2D

\def\A{{\bf A}}
\def\B{{\bf B}}

\def\S{{\bf S}}

\DeclareSymbolFont{bbold}{U}{bbold}{m}{n}
\DeclareSymbolFontAlphabet{\mathbbold}{bbold}

\newcommand*{\integers}{\mathbb{Z}}
\newcommand*{\rationals}{\mathbb{Q}}
\newcommand*{\naturals}{\mathbb{N}}

\newcommand*{\csp}{\textsc{CSP}}
\newcommand*{\maxcsp}{\textsc{MaxCSP}}
\newcommand*{\probstp}{\textsc{STP}}
\newcommand*{\maxstp}{\textsc{MaxSTP}}
\newcommand*{\maxsat}{\textsc{MaxSAT}}

\newcommand*{\variables}{V}
\newcommand*{\constraints}{C}

\newcommand*{\stp}{\stpfull^{\leqslant}\xspace}

\algrenewcommand{\algorithmiccomment}[1]{\hfill$//$ \textit{#1}}

\newcommand{\cc}[1]{{\mbox{\textnormal{\textsf{#1}}}}\xspace}  

\newcommand{\Nat}{\mathbb{N}}
\newcommand{\bigoh}{O}
\newcommand{\PP}{\cc{P}}
\newcommand{\NP}{\cc{NP}}

\newcommand{\FPT}{\cc{FPT}}
\newcommand{\XP}{\cc{XP}}
\newcommand{\Weft}{{\cc{W}}}
\newcommand{\W}[1]{{\Weft}{\cc{[#1]}}}
\newcommand{\paraNP}{\cc{pNP}}

\newcommand{\hy}{\hbox{-}\nobreak\hskip0pt}

\def\stp{{\bf S}\xspace}

\newcommand*{\magni}[1]{{\sf mag}(#1)\xspace}

\newcommand*{\tw}{{\sf tw}\xspace}
\newcommand*{\vc}{{\sf vc}\xspace}

\usepackage{boxedminipage}
\newcommand{\probfont}[1]{{\sc #1}\xspace}

\newcommand{\MCC}{\probfont{Multicolor Clique}}

\newcommand{\pbDef}[3]{%
\noindent
\begin{center}
\begin{boxedminipage}{0.98 \columnwidth}
#1\\[5pt]
\begin{tabular}{l p{0.70 \columnwidth}}
Input: & #2\\
Question: & #3
\end{tabular}
\end{boxedminipage}
\end{center}
}

\newcommand{\pbDefP}[4]{%
\noindent
\begin{center}
\begin{boxedminipage}{0.98 \columnwidth}
#1\\[5pt]
\begin{tabular}{l p{0.70 \columnwidth}}
Input: & #2\\
Param.: & #3\\
Question: & #4
\end{tabular}
\end{boxedminipage}
\end{center}
}

\mathchardef\mhyphen="2D

\def\integers{\mathbb{Z}}
\def\rationals{\mathbb{Q}}
\def\naturals{\mathbb{N}}

\def \pi  {{\sf pi}\xspace}

\title{Maximum Satisfiability of\\ Simple Temporal Problems}

\author{Johannes K. Fichte}
\author{Johanna Groven}
\author{Peter Jonsson}
\author{Victor Lagerkvist}
\author{Jorke M. de Vlas}
\affil{Department of Computer and Information Science (IDA)\\ Link\"oping University \\ $\;$ \\ {\tt firstname.lastname@liu.se}}

\date{}

\newcommand{\TTT}{\ensuremath{\mathcal{T}}}
\usepackage{microtype}
\begin{document}

\maketitle

\begin{abstract}

The Simple Temporal Problem (\probstp) is a core framework for quantitative temporal constraints. As \probstp\ data can be inconsistent, we study $\maxstp$: compute a maximum-cardinality consistent subset of constraints. This extension is \NP-hard, and we analyze its parameterized complexity under measures that capture practically relevant instance features: the number of variables $n$ (instance scale), the maximum coefficient magnitude $k$ (numeric range), and structural parameters of the constraint graph such as treewidth $\tw$ (decomposability) and vertex cover size $\vc$ (density).
We show that $\maxstp$ is \W{1}-hard parameterized by $n$, implying that $n$ and parameters that depend on $n$ (including $\tw$ and $\vc$) are insufficient for fixed-parameter tractability. For combined parameters, we give an $O^*(k^n)$-time algorithm, yielding single-exponential solvability for fixed $k$. While $k+\tw$ remains \W{1}-hard, $\maxstp$ is in \XP via an $O^*((n\cdot k)^{\tw})$ algorithm.
Our results suggest that \maxstp\ is often computationally harder than optimizing qualitative CSPs---we verify that many such problems (including RCC-8 and Allen's algebra) are \FPT\ when parameterized by $n$ or $\tw$. However, we also demonstrate that \FPT\ algorithms for \maxstp\ are indeed possible but with other parameters such as $k + \vc$. 
\end{abstract}

\section{Introduction}
The \emph{simple temporal problem}~\cite{Dechter:etal:ai91} (\probstp) is a fundamental quantitative temporal constraint problem in AI. The task is to decide if a set of constraints (for $a, b \in \mathbb{Z} \cup \{-\infty,\infty\}$) of the form $a \leq x - y \leq b$, or $a \leq x \leq b$, is consistent, in the sense that it admits at least one satisfying assignment. One of the original motivations for introducing this formalism is that it is a simple and efficient way that makes it possible to detect inconsistencies in temporal data. The \probstp\ problem is the most well-known temporal framework with thousands of reported applications~\cite{hunsberger_et_al:LIPIcs.TIME.2021.1}.
Early success stories include e.g.\ autonomous vehicles in space exploration~\cite{722362,fukunaga}, and temporal planning~\cite[Chapter 14]{planningbook} while more recent examples include e.g. managing temporal data in robotic scheduling~\cite{wang2022} and multi-agent systems~\cite{10.5555/2566972.2566976}. 

\probstp\ can be decided in polynomial time but has limited expressive power for handling uncertainty. To cope with this several extensions have been considered, e.g., by allowing disjunctions~\cite{Dechter:etal:ai91,Tsamardinos:Pollack:ai2003}, and, more recently~\cite{dabrowski2022}, by allowing a small set of inconsistencies. 
 Thus, we want to identify a maximum consistent subset of the data: this is
the \emph{maximum constraint satisfaction problem} (\probfont{MaxCSP}). Such problems are well-studied in the AI literature
  (see, for instance, see~\cite{condotta2016local,condotta2016sat,chomicki2005minimal,bertossi2004query} and the references therein).
Its Boolean analogue is the ubiquitous \maxsat\ problem where 
clauses in Boolean CNF are considered~\cite{BacchusJarvisaloMartins21}.
An annual solver evaluation (\maxsat\ Evaluation) collects applications, and instances, and tests solvers practically~\cite{BergEtAl24}.

Hence, we study the problem $\maxstp$: given an \probstp\ instance and $m \geq 0$, decide if it possible to obtain a satisfiable instance by keeping at least $m$ constraints. 
We first observe that \maxstp, in contrast to STP, is \NP-hard~\cite{dabrowski2022} so we shift to a \emph{parameterized} complexity view where we identify parameters that correlate to a ``hidden structure'' so that the problem can be solved efficiently. In the best case, we can solve a problem in $f(k) \cdot ||I||^{O(1)}$ time, where $I$ is an instance with parameter $k \in \mathbb{N}$, $||I||$ the number of bits required to represent it, and $f \colon \mathbb{N} \to \mathbb{N}$ a computable function. A problem admitting such an algorithm is said to be \emph{fixed parameter tractable} (\FPT). The parameterized complexity of \maxsat\ is well-studied (cf.~\cite{Dell:etal:algorithmica2017}) and recently
also solvers that employ and improve on parameterized techniques (such as \emph{treewidth}) have been introduced~\cite{BannachHecher24}. However, there is little reason to believe that these results carry over to the STP setting. In fact, we will establish that $\maxstp$ is fundamentally different.

For \maxstp\ we may immediately remark that the parameter $m$ in itself hardly makes sense since we would expect it to grow with the instance size. Here, we diverge from Dabrowski et al.~\cite{dabrowski2022} who considered the dual problem \textsc{AlmostCSP} where the task is instead to delete at most $m$ constraints --- the problems are clearly classically equivalent but differ under parameterized and approximation complexity. Instead, we propose the \emph{magnitude} $\magni{C}$, the largest absolute value of any constant appearing in an instance $(V,C)$. Towards \FPT\ algorithms we first (in Section~\ref{sec:unbounded_magnitude}) investigate the parameterized complexity of $\maxstp$ with parameter $n = |V|$. Observe that this for many problems, e.g., finite-domain CSPs, and qualitative reasoning problems such as \emph{Allen's interval algebra} and the \emph{region connection calculus}, is an incredibly strong parameter that gives trivial \FPT\ algorithms by exhaustive enumeration. This is not possible for \maxstp\, but we manage to construct an algorithm with a running time dominated by $\magni{C}^n$, i.e., for fixed magnitude $\magni{C}$ the problem is solvable in single-exponential time.
We match this with a sharp  \W{1}-hardness proof (under parameter $n$) via a \emph{Sidon set}  construction from the \textsc{multicolor clique} problem. Under the conjecture that 3-SAT is not solvable in subexponential time (the \emph{exponential-time hypothesis} (ETH) we can even infer that $\maxstp$ is not solvable in $2^{o(n\log \magni{C} + f(n))}$ time for any computable function $f \colon \naturals \to \naturals$, i.e., we have matching upper and lower bounds for $\maxstp$ when analyzed with $n$. 
An important consequence of this lower bound is that it rules out \FPT\ algorithms for e.g.\ graph parameters that depend on $n$, such as treewidth, showing the need to consider multiple parameters to reach \FPT. We analyze this in Section~\ref{sec:boundedmagnitude} and consider magnitude combined with either treewidth (\tw), or the size of a minimal vertex cover (\vc). Here, we begin with the weaker parameter \vc and construct a branching based \FPT\ algorithm that exploits the parameters to bound the domain of each variable. Treewidth turns out to be less favourable and, building on the construction in Section~\ref{sec:unbounded_magnitude},  we give a \W{1}-hardness proof. Treewidth still turns out to be useful, however, and while insufficient for \FPT, it can still be used to obtain an \XP\ algorithm. 

Put together, $\maxstp$ is fairly resilient, but not immune, to \FPT\ algorithms, and parameters that normally work (e.g., $n = |V|$ or $\tw$) lead to \W{1}-hardness.  In Section~\ref{sec:qualitative} we compare this to the corresponding question for \emph{qualitative} reasoning, i.e., frameworks where we only care about the relations between variables, and not their precise numerical values. \emph{Allen's interval algebra}, the \emph{region connection calculus}, and the \emph{cardinal direction calculus}, are all well-known examples of \NP-hard qualitative reasoning problems. They can be formulated as infinite-domain CSPs for ``well-behaved'' templates that often makes it possible to use parameters and techniques from the finite-domain case. Here, the gap between $\csp$ and $\maxcsp$ is much more narrow and DP-style algorithms for the former can often be extended to the latter. We exemplify a general criterion for when this is possible which, for example, captures Allen's interval algebra $\csp(\mathrm{Allen})$, where we get \FPT\ for $\maxcsp(\mathrm{Allen})$ with parameter $\tw$ as well as an improved DP algorithm under parameter $n$. Hence, the $\maxcsp$ problem seems to be fundamentally different for qualitative versus quantitative reasoning.

\section{Preliminaries}

We begin by presenting the notation, the (maximum) CSPs, and the most relevant concepts from parameterized complexity. To express running times, we sometimes use the notation $O^*(\cdot)$ that hides factors polynomial in the input size.

\subsection{Constraint Satisfaction}
\label{sec:csp}

A \emph{constraint language} $\A$ is a set of relations over a set $D$ (a \emph{domain}).
  The {\em constraint satisfaction problem} over $\A$ ($\csp(\A)$)
  is defined as follows:

\pbDef{\csp(\A)}
{A tuple $(\variables,\constraints)$, where $\variables$ is a set
      of variables and $\constraints$ is a multiset of constraints of the form
      $R(v_1, \dots, v_a)$, where $a$ is the
      arity of $R$, $v_1, \dots , v_a \in V$, and $R \in \A$.}
{Is there a function $f : V \rightarrow D$
          such that $(f(v_1), \dots , f(v_a)) \in R$ for every
          $R(v_1, \dots , v_a) \in \constraints$?} 
          
The function $f$ is said to be a a {\em satisfying assignment} or simply a {\em solution}.   
Having a multiset of constraints is a simple way of introducing a weighting
mechanism for constraints in the forthcoming \maxcsp\ problem; sometimes more elaborate ways are considered but this
one is sufficient for our needs.

The {\em value} of an assignment $\varphi$ for $I=(V,C)$ is the number of constraints in $C$ satisfied by $\varphi$.
The \emph{maximum constraint satisfaction problem} ($\maxcsp(\A)$)
is defined as follows:

\pbDef{\maxcsp(\A)}
{An instance $(V,C)$ of $\csp(\A)$ and an integer $m$.}
{Is there a set $X \subseteq C$ such that $\abs{X} \geq m$ and $(V,X)$ is satisfiable?}

Given an instance $((V,C),m)$ of $\maxcsp(\A)$, the set 
$X$ can be computed 
with $|C|$ calls to an algorithm for $\maxcsp(\A)$. 
Hence, we can view $\maxcsp(\A)$ as a decision problem without loss of generality.

The {\em primal graph} of a CSP instance $(V,C$) has vertex set $V$ and
and two vertices are joined if their corresponding variables occur together 
in a constraint in $C$. 
We view the primal graph as a simple undirected graph, even though it is
technically a graph with loops on every vertex 
(equivalently, one may only join distinct vertices).

\subsection{Relations and Constraint Languages}
\label{sec:relations}

Dechter et al.~\cite{Dechter:etal:ai91} introduced the highly influential {\em simple
temporal problem}.
Let $\textbf{S}_b$ be the set of binary relations
\[ B_{a,b} = \{ (x_1,x_2) \in \rationals^2 \mid a \leq x_1 - x_2 \leq b \} \]
for endpoints $a \in \integers \cup \{-\infty\}$ and
$b \in \integers \cup \{\infty\}$ such that $a \leq b$,
$(a,b)\neq(-\infty,\infty)$.
Let $\textbf{S}_u$ be the set of unary relations
$U_{a,b} = \{ (x) \in \rationals \mid a \leq x \leq b \}$
for endpoints $a \in \integers \cup \{-\infty\}$ and
$b \in \integers \cup \{\infty\}$ such that $a \leq b$,
$(a,b)\neq(-\infty,\infty)$.
$\csp(\textbf{S})$ where $\textbf{S}=\textbf{S}_b \cup \textbf{S}_u$ is known as the {\em simple temporal problem} ($\probstp$).
To aid readability, we sometimes denote 
the constraint $R_{a,b}(x,y)$ by $a \leq x - y \leq b$.
We follow e.g.~\cite{Tsamardinos:Pollack:ai2003} and assume
that bounding values are integers.
This implies that every
satisfiable instance of
$\probstp$ admits an integer solution (see Dechter et al.~\cite[Section 3]{Dechter:etal:ai91}).
We can thus work over $\integers$ without loss of generality.

Given an \probstp\ relation $R_{a,b}$, we let $\magni{R_{a,b}}=\max(\{|a|,|b|\} \setminus \{\infty\})$. 
We say that $\magni{R_{a,b}}$ is the {\em magnitude} of $R_{a,b}$.
Note that the magnitude of, for instance, $R_{1,\infty}$ is 1.
If $X$ is a set of relations or constraints, then the definition of $\magni{\cdot}$ extends naturally: $\magni{X} = \max_{R \in X} \magni{R}$, i.e. $\magni{X}$ is the
least
upper bound on absolute values of all numerical bounds appearing in the relations of $X$.

Let $\maxstp$ denote the problem $\maxcsp(\textbf{S})$ and $\maxstp_b$ denote
$\maxstp$ restricted to binary relations.
We will look at restricted versions of $\maxstp$
so we let $\textbf{S}^{(k)}$, $k \in \naturals$, denote $\textbf{S}$ restricted to relations $R$
satisfying $\magni{R} \leq k$ and let
$\maxstp^{(k)}$
denote $\maxcsp(\textbf{S}^{(k)})$
(and $\maxstp_b^{(k)}$ is defined in the obvious way).

We finally note that the CSP$(\textbf{S}_b)$ problem
is invariant under translation, i.e. 
if $\varphi : V \rightarrow \integers$ satisfies an instance $(\variables, \constraints)$, 
then so does $\varphi'(v) = \varphi(v) + c$ 
for all $v \in \variables$ and an arbitrary $c \in \integers$.
Thus, we can pick any variable in $\variables$ 
and assume that its value is zero without loss of generality.
We call such a variable a \textit{zero variable}.
Augmenting an instance of \probstp\ with a zero variable $z$
allows us to express unary constraints in $\textbf{S}_b$, e.g. the constraint $0 \leq x - z \leq 2$ 
is equivalent to $x \in \{0,1,2\}$.

\subsection{Parameterized Complexity}
\label{sec:para-complexity}

We use the framework of 
{\em parameterized complexity}~\cite{DowneyFellows13,book/FlumG06,book/Niedermeier06},
where the run-time of an algorithm is studied with respect to a parameter
$p\in\Nat$ and the input size~$n$. A parameterized problem is then a subset of $\Sigma^* \times {\mathbb N}$
(where $\Sigma$ is the input alphabet).
The idea is that the parameter describes the structure of
the instance in a computationally meaningful way.
Here, the most favourable complexity class is \FPT,
which contains all problems that are \emph{fixed-parameter
  tractable (\FPT)}, i.e. can be decided 
in $f(p)\cdot n^{O(1)}$ time,  where $f$ is a computable
function.
The next best option is the
complexity class \XP, which contains all problems decidable
in $n^{f(p)}$ time, i.e. the problems solvable in polynomial time
when the parameter $p$ is bounded.
Clearly, $\FPT \subseteq \XP$ and
this inclusion is strict 
(see e.g.~\cite[Cor. 2.26]{book/FlumG06}).
It is significantly 
better if a problem is in \FPT than in \XP
since the order of the polynomial factor in the former case does not depend on the parameter $p$.
Finally, the class \paraNP\ contains all problems that can be decided 
in $f(p)\cdot n^{O(1)}$ time by a non-deterministic algorithm for some computable function $f$.
It is known that
a problem is \paraNP{}\hy hard (under \FPT{\em -reductions}; see
below) if it is \NP{}\hy hard for some constant value of the parameter. 
Problems that are \paraNP{}\hy hard are considered to be
significantly harder than those in \XP since a problem that is
\paraNP{}\hy hard cannot be in \XP unless \PP = \NP.

Reductions between parameterized problems must take
the parameter into account. To this end, we use {\em parameterized reductions} (or \FPT-reductions).
Let $L_1$ and $L_2$ denote parameterized problems with $L_1 \subseteq \Sigma_1^* \times {\mathbb N}$
and $L_2 \subseteq \Sigma_2^* \times {\mathbb N}$. 
A parameterized reduction from $L_1$ to $L_2$ is a
mapping $P: \Sigma_1^* \times {\mathbb N} \rightarrow \Sigma_2^* \times {\mathbb N}$
such that

\smallskip

\noindent
1.
  $(x, k) \in  L_1$ if and only if $P((x, k)) \in L_2$, 

\smallskip

 \noindent
 2.
  the mapping can be computed by an \FPT-algorithm 
  with respect to the parameter $k$, and 

\smallskip

  \noindent
  3.
  there is a computable function $g : {\mathbb N} \rightarrow {\mathbb N}$ 
such that for all $(x,k) \in L_1$ if $(x', k') = P((x, k))$, then $k' \leq g(k)$.
\smallskip

\noindent
The class $\Weft[1]$ contains all problems that are \FPT-reducible to \textsc{Independent Set} parameterized by
the size of the solution set.
Showing $\Weft[1]$-hardness (by an \FPT-reduction) for a problem rules out the existence of an \FPT\
algorithm under the assumption $\FPT \neq \Weft[1]$.
To obtain sharper bounds we sometimes use the
{\em exponential-time hypothesis} (ETH), which is
an ubiquitous
computational hardness assumption, implying that satisfiability of 3-CNF Boolean formulas (3-SAT) with $n$ variables cannot be solved in $2^{o(n)}$ time.

\paragraph{Tree Decompositions and Treewidth}
A central parameter is \emph{treewidth} which intuitively 
describes how far a graph is from being a tree.
While the name was introduced by~\cite{Robertson:Seymour:jctb84}, the idea appeared much earlier in nonserial dynamic programming~\cite{Bertele:Brioschi:NDP72} and have been intensively applied in AI (e.g. bucket elimination~\cite{Dechter99}).
A tree decomposition $\TTT=(T, \chi)$ of a graph $G = (V, E)$ 
consists of a rooted tree $T$ and a mapping $\chi$ that assigns each node $t \in V(T)$ a set $\chi(t)\subseteq V$, called \emph{bag}, such that 
$E(G) \subseteq \{ \{u,v\} \,|\, t \in V(T), \{u,v\} \subseteq \chi(t)\}$; and 
(ii)~for every $v \in V$, the nodes, for which the bag contains~$v$, 
form a non-empty sub-tree of $T$.
We let $\text{width}(\TTT) = \max \{
\abs{\chi(t)}-1 : t \in T \}$. 
The {\em treewidth} of a graph $G$,
denoted by $\tw(G)$, is the minimum $\text{width}(T)$ over all tree decomposition of~$G$. 
Given an instance $I$ of CSP, we use the phrase {\em primal treewidth}
for the treewidth of the primal graph of $I$.
For arbitrary but fixed $w \geq 1$, we can decide in linear time whether a graph has treewidth at most $w$ and, if so, to compute a tree decomposition of width $w$~\cite{Bodlaender:sicomp96}.

\section{Bounded Magnitude is Necessary} \label{sec:unbounded_magnitude}

Exhaustive enumeration of all subsets of constraints solves $\maxstp$ in $O^*(2^{|C|}) \subseteq O^*(2^{||I||})$ time where $||I||$ is the number of bits required to represent the instance $I$. Since $||I||$ can be \emph{much} larger than the number of variables $n$ this bound says virtually nothing. We begin this section by constructing an algorithm with running time dominated by $\magni{C}^n$, i.e., $2^{n \log \magni{C}}$.
Thus, $\maxstp$ is solvable in single-exponential time for any fixed magnitude. 

\begin{theorem} \label{thm:time-upperbound}
    $\maxstp_b$ can be solved in $O^*((k+3)^n)$ time, where $n=|V|$ and $k=\magni{C}$.
\end{theorem}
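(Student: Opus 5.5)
The approach is to show that an optimal assignment can be normalized so that it is described by little information, and then run a dynamic program over these descriptions. The state space of the program will have exactly $(k+3)^n$ elements.

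\emph{Normalization.} Let $\varphi : V \to \integers$ be an assignment maximizing the number of satisfied constraints; integrality is justified in Section~\ref{sec:relations}. Sort the variables so that $\varphi(v_1) \leq \dots \leq \varphi(v_n)$ and put $g_i = \varphi(v_{i+1}) - \varphi(v_i) \geq 0$.

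Consider a binary constraint $a \leq x - y \leq b$ with $\magni{\cdot} \leq k$. Whether it holds depends on $x-y$ only through its value when $|x-y| \leq k$; otherwise only the sign of $x-y$ matters. Indeed, if $x-y > k$ then the constraint holds iff $b = \infty$, because $a \leq k$ or $a=-\infty$. The case $x - y < -k$ is symmetric.

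Now replace every gap $g_i > k+1$ by $k+1$. A difference $\varphi(v_j)-\varphi(v_i)$ with $i<j$ changes only if the interval between them contains a replaced gap. In that case the difference was $>k$ before and is still $\geq k+1 > k$ afterwards, so the sign and "large" status are preserved. Differences that do not span a replaced gap are unchanged. Hence the new assignment $\varphi'$ satisfies exactly the same constraints, and we may assume every $g_i \in \{0,\dots,k+1\}$.

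\emph{Dynamic program.} We build the sorted order from the bottom up. A state is a map $\sigma : V \to \{\bot, 0, 1, \dots, k, \top\}$, which has $k+3$ possible values per variable:
\begin{itemize}
\item $\bot$ means the variable is not yet placed;
\item $d \in \{0,\dots,k\}$ means it is placed at distance exactly $d$ below the currently largest placed value;
\item $\top$ means it is placed at distance $>k$ below it.
\end{itemize}
The table entry $T[\sigma]$ stores the maximum number of constraints satisfied among constraints whose two variables are both placed.

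Initialization sets $T[\sigma]=0$ for each $\sigma$ that has a single variable labelled $0$ and all others $\bot$. This also counts trivially satisfied constraints on a single variable, such as $x-x$.

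A transition picks an unplaced variable $v$ and a gap $g \in \{0,\dots,k+1\}$. The new state $\sigma'$ is obtained by setting $\sigma'(v)=0$ and, for each placed $u$, $\sigma'(u) = \sigma(u)+g$ if this is $\leq k$, and $\top$ otherwise (in particular $\top$ stays $\top$). For every constraint between $v$ and a placed $u$, the difference $\varphi(v)-\varphi(u)$ is either known exactly, namely $\sigma(u)+g$, or known to exceed $k$. By the observation in the normalization step, this determines whether the constraint is satisfied, so the gain is computable in polynomial time. We then update $T[\sigma'] \leftarrow \max(T[\sigma'], T[\sigma] + \text{gain})$.

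The answer is yes iff $T[\sigma] \geq m$ for some $\sigma$ with no $\bot$. There are $(k+3)^n$ states and each has at most $n(k+2)$ transitions, so the running time is $O^*((k+3)^n)$. Processing states in order of increasing number of placed variables makes the recursion well founded.

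\emph{Correctness.} Every run of the program corresponds to an actual integer assignment together with its sorted order, and it counts exactly the constraints that assignment satisfies. Each constraint is counted once, at the moment its later-placed endpoint is added. This gives soundness. Conversely, the normalized optimum $\varphi'$ is realized by the run that follows its sorted order with its gaps $g_i \le k+1$, which gives completeness.

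The step needing the most care is the normalization argument. One must check that truncating gaps to $k+1$ (and not to $k$) preserves every constraint, including half-infinite ones such as $R_{a,\infty}$ and constraints written in either orientation $x-y$ versus $y-x$. The rest is routine bookkeeping.
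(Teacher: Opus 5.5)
Your proposal is correct and is essentially the paper's dynamic program. It uses the same $(k+3)^n$ states, in which each variable is unplaced, exactly $d\in\{0,\dots,k\}$ below a sweeping reference point, or more than $k$ below it, and it builds the assignment in sorted order. The differences from the paper are minor. You measure offsets from the last placed value and jump by a gap $g\in\{0,\dots,k+1\}$, whereas the paper advances a boundary in unit steps. You also prove the gap-truncation normalization that the paper only asserts ("the gap between two variables will be at most $k+1$").

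One detail in your running-time count needs tightening. Since $k$ can be exponential in the input length, ``$(k+3)^n$ states times $n(k+2)$ transitions'' is not literally $O^*((k+3)^n)$. The bound does hold, because every reachable state has a placed variable labelled $0$. So there are at most $(k+3)^n-(k+2)^n\le n(k+3)^{n-1}$ reachable states, and hence at most $n^2(k+3)^n$ transitions in total. The paper sidesteps this because its unit-step boundary gives only $n+1$ transitions per state.
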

\begin{proof}
    We use dynamic programming and construct all relative positions of variables with respect to some boundary value moving to the right on the natural line.
    Specifically, we keep track of a partition of the $n$ vertices in $k+3$ groups: one group $A_<$ containing vertices more than $k$ below the boundary value, $k+1$ groups $A_0, \ldots, A_k$ containing variables exactly $0, \ldots, k$ below the boundary value, and one group $A_>$ containing vertices at or above the boundary value. Note that the vertices at the boundary value can be either in $A_0$ or in $A_>$. Overall, this results in $(k+3)^n$ states.
    We call variables in $A_<,A_0,\ldots,A_k$ as \emph{decided} and variables in $A_>$ as \emph{undecided}. 
    
    The algorithm starts in the state where $A_< = A_0=\ldots = A_k=\emptyset$ and $A_>=V$ and recursively moves variables from $A_>$ to $A_<$, maintaining that for each configuration $(A_<,A_0,\ldots, A_k,A_>)$, the maximum number of satisfied constraints between decided variables is saved. 
    In each step, the two following transitions are considered:
    
    \smallskip
    
    \noindent
    {\em Transition 1}. We pick any undecided variable $v$ and assign it the current boundary value. This moves it from group $A_>$ to group $A_0$. Since $v$ is now decided, the subproblem needs to account for constraints involving $v$ and other decided variables $u$. 
    For all $u \in A_i, 0 \leq i \leq k$, we have $v-u = i$. For all $u \in A_<$, we have $v-u > k = \magni{C}$. In both cases we know whether a constraint between $v$ and $u$ is satisfied.
    
    \smallskip
    
    \noindent
    {\em Transition 2}.
    We increase the boundary by one. This merges $A_<$ and $A_k$ into a new group $A_<$, relabels each group $A_i$ as $A_{i+1}$ for $i \in \{0, \ldots, k-1\}$, and adds a new empty group $A_0$. Since this transition does not change the decided variables or their relative differences, the solution to the corresponding subproblem is unchanged.
    We disable this transition when the groups $A_0, \ldots, A_k$ are all empty as we can assume that without loss of generality, the gap between two variables will be at most $k+1$ by $k = \magni{C}$.
    
    \smallskip
    
    Clearly, each path from the starting state to the final state ($A_< = V$) using these transitions corresponds to a potential solution to the $\maxstp_b$ instance, and each potential solution corresponds to such a path. We now show in which order to visit all the states. Consider the quantity
    \[\abs{A_0} + 2\abs{A_1} + \ldots + (k+1)\abs{A_k} + (k+2)\abs{A_<}.\]
    In the first transition, we add one variable to $\abs{A_0}$, hence this quantity increases by one. In the second transition, each variable from the groups $A_0, \ldots, A_k$ moves to a group with a larger effect on this quantity, hence, outside of the edge case where all these groups are empty, it always increases. This shows that, when we visit the states ordered by this quantity, breaking ties arbitrarily, we can only visit a state after all states with a transition towards it have already been considered.

    Overall, this shows that the DP solves the $\maxstp_b$ instance. Since there are $(k+3)^n$ states, each state has at most $n+1$ transitions, and each transition can be computed in linear time, the total runtime is $O^*((k+3)^n)$.
\end{proof}

We complement this upper bound by a matching lower bound (under standard complexity theoretical assumptions) for the parameter $n = |V|$. Here, it is important to recall that $n$ is a very strong parameter that gives trivial \FPT\ algorithms for all finite-domain CSPs and infinite-domain qualitative reasoning problems. As such, it is a very useful lower bound since it can also (as we will show in Theorem~\ref{thm:no_graph_parameters}) be used to obtain lower bounds for \emph{all} reasonable graph parameters.
The heart of our reduction is the \MCC problem.

\pbDefP{\MCC}
{A graph $G = (V,E)$ with a vertex coloring $f_C \colon V \to C$ to some color set $C$ such that each edge is between vertices of different colors.}
{The number of colors $\abs{C}$}
{Is there a clique in $G$ with exactly one vertex from each color?} 

\begin{theorem}[Theorem~5.2 in \cite{Lokshtanov:etal:beatcs2011}]
    \label{thm:mcclique-hard}
    \MCC is (1) \W{1}-hard and (2) cannot be solved in $f(c)n^{o(c)}$ time unless the ETH fails, where $c = \abs{C}$ is the number of colors, $n = \abs{V}$ the number of vertices, and $f : \naturals \to \naturals$ any computable function.
\end{theorem}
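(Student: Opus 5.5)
Since this result is quoted from \cite{Lokshtanov:etal:beatcs2011}, I would not re-derive hardness from scratch. Instead I would give a simple parameter-preserving reduction from \textsc{Clique} parameterized by solution size $k$. For that problem I take two facts as known. First, it is \W{1}-complete, since it is equivalent to \textsc{Independent Set} on the complement graph. Second, under the ETH it admits no $f(k)n^{o(k)}$-time algorithm; this is the Chen et al.\ lower bound, obtained by grouping the variables of a 3-SAT formula, or a 3-\textsc{Coloring} instance, into $k$ blocks.

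The reduction works as follows. Given $(G=(V,E),k)$, build $G'$ with vertex set $V \times \{1,\dots,k\}$ and colouring $f_C(v,i)=i$. Join $(u,i)$ and $(v,j)$ exactly when $i\neq j$ and $\{u,v\}\in E$; in particular $u\neq v$, since $G$ has no loops. Every edge of $G'$ then joins vertices of different colours, as the problem requires.

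Correctness has two directions. If $\{v_1,\dots,v_k\}$ is a $k$-clique in $G$, then $\{(v_i,i)\}_{i\le k}$ is a multicolored clique in $G'$. Conversely, suppose $\{(v_i,i)\}_{i\le k}$ is a multicolored clique in $G'$. Every pair $v_i,v_j$ with $i\neq j$ is then adjacent in $G$. Hence the $v_i$ are pairwise distinct, because adjacency in $G$ forbids $v_i=v_j$, and they form a $k$-clique in $G$. This distinctness argument is the only subtle point, and it is exactly why the edge set is defined by adjacency in $G$ rather than by ``equal or adjacent''.

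The construction runs in polynomial time and sets the new parameter to $c=k$, so it is an \FPT-reduction. This gives part~(1). For part~(2), suppose \MCC\ could be solved in $f(c)N^{o(c)}$ time, where $N=kn$ is the number of vertices of $G'$. Composing with the reduction would solve \textsc{Clique} in time $f(k)(kn)^{o(k)} \le f'(k)\,n^{o(k)}$. This contradicts the ETH lower bound for \textsc{Clique}.

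The main obstacle is this ETH lower bound for \textsc{Clique} itself, if it is not taken as given. To prove it, I would split the $N$ variables of a sparse 3-SAT instance into $k$ groups of size $N/k$, which is available after the sparsification lemma, so that the number of clauses is $O(N)$. Each partial assignment to a group that satisfies its internal clauses becomes a vertex, and two vertices are adjacent when their partial assignments are consistent with all clauses spanning the two groups. This yields graphs with $n\le k\,2^{N/k}$ vertices. An $f(k)n^{o(k)}$ algorithm would therefore give a $2^{o(N)}$-time algorithm for 3-SAT when $k$ is chosen as a slowly growing function of $N$, contradicting the ETH.
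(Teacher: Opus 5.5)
Your main argument is correct. In the construction on $V\times\{1,\dots,k\}$, $(u,i)$ and $(v,j)$ are adjacent iff $i\neq j$ and $uv\in E$, so multicoloured cliques correspond exactly to $k$-cliques of $G$. Looplessness is precisely what forces the $v_i$ to be distinct. In part~(2), the factor $k^{o(k)}$ and the polynomial cost of the reduction are absorbed into $f'(k)\,n^{o(k)}$.

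The paper gives no proof and only cites the survey. To my recollection, the argument there (as in standard textbook treatments) derives the bound in one step from 3-\textsc{Coloring} on graphs with $O(N)$ edges. You split the vertices into $k$ groups, create one vertex per proper 3-colouring of each group, and join two vertices from different groups when their union is still proper. That graph is already a \MCC instance with the groups as colour classes, so \textsc{Clique} and \MCC get the lower bound simultaneously. Your route is more modular: it uses the \textsc{Clique} facts as black boxes and isolates the easy passage to the coloured version. The price is an extra reduction and reliance on the Chen et al.\ bound as given.

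The one faulty step is your optional sketch of that \textsc{Clique} bound. Starting from 3-SAT, a clause whose three variables lie in three different groups is never checked, because each edge only tests clauses touching two groups. A $k$-clique can then correspond to an assignment falsifying such clauses, so the backward direction of that reduction fails. The fix is to start from a problem with binary constraints. One option is sparse 3-\textsc{Coloring}, which you mention yourself in your first paragraph. Another is to first turn the 3-CNF into a binary CSP with $O(N)$ variables over a constant-size domain. In either case, make vertices of the same group explicitly non-adjacent. With that change the sketch becomes the standard proof, and since it outputs a multicoloured instance directly, your first reduction is no longer needed.
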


Our reduction is based on {\em Sidon sets}: a set $S$
of integers such that the sum of any pair of its elements is unique, i.e.  if $a + b = c + d$ for
$a, b, c, d \in S$, then $\{a, b\} = \{c, d\}$.
Differences are easier to use in our proofs so we 
use an equivalent condition: 
for all $a,b,c,d \in S$ such that $a \neq b$ and $c \neq d$, $a-b=c-d$ holds
if and only if $a=c$ and $b=d$.
The {\em order} of a Sidon set is the number of elements in it and the {\em length} is the difference between its maximal and minimal elements.
For example, $\{0,1,4,6\}$ is a Sidon set (see Figure~\ref{tb:sidon}) of order $4$ with length $6$.
It is well-known that "reasonable" Sidon sets are
computable in polynomial time. For instance,
Dabrowski et al.~\cite[Section~2.3]{Dabrowski:etal:difflogicreport} describe
how a Sidon set of order $k$ and length $8k^2$ can be computed in polynomial time.

\begin{table}

\begin{center}
\begin{tabular}{|c|c|c|c|c|}\hline
      & $b=0$ & $b=1$ & $b=4$ & $b=6$\\ \hline
$a=0$ & 0     & -1    & -4    & -6\\
$a=1$ & 1     & 0     & -3    & -5\\
$a=4$ & 4     & 3     & 0     & -2\\
$a=6$ & 6     & 5     & 2     & 0 \\ \hline
\end{tabular}
\end{center}
\caption{The set $S=\{0,1,4,6\}$ is a Sidon set:
$a-b$ has distinct values whenever $a \neq b$ and $a,b \in S$.}
\label{tb:sidon}
\end{table}

\begin{theorem} \label{thm:generallowerbound}
$\maxstp_b$ (and thus $\maxstp$) is \W{1}-hard when parameterized by the number of variables. It cannot be solved in $2^{o(n\log k + f(n))}$ time unless the ETH fails where $n = \abs{V}$ is the number of variables, $k = \magni{C}$ the magnitude, and $f \colon \naturals \to \naturals$ any computable function.
\end{theorem}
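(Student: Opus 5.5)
We reduce from \MCC (Theorem~\ref{thm:mcclique-hard}). Let $G=(V_G,E)$ have colour classes $V_1,\dots,V_c$, with $N=\abs{V_G}$. Compute a Sidon set of order $N$ and length $O(N^2)$ (Dabrowski et al.), and assign each vertex $u\in V_G$ a distinct element $s(u)$. The $\maxstp_b$ instance has variables $x_1,\dots,x_c$, one per colour, plus a zero variable $z$, so $n=c+1$.

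\emph{Vertex choice.} For every $u\in V_i$ add the unary-type constraint $s(u)\le x_i-z\le s(u)$ with multiplicity $W$. Any assignment satisfies at most one such block per colour, since the values $s(u)$ are distinct.

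\emph{Edge gadget.} For every edge $\{u,v\}$ with $u\in V_i$, $v\in V_j$ and $i<j$, add $s(u)-s(v)\le x_i-x_j\le s(u)-s(v)$ with multiplicity $1$.

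Choose $W$ larger than $\abs{E}$ and set $m=cW+\binom{c}{2}$. An optimal assignment must then satisfy one vertex block per colour: losing a block costs $W$, and edge constraints cannot compensate. Fix such an assignment, so $x_i-z=s(u_i)$ for some $u_i\in V_i$.

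The key step is showing that, once every block is satisfied, each pair $i<j$ contributes at most one satisfied edge constraint, and exactly one iff $\{u_i,u_j\}\in E$. Suppose an edge constraint for $\{u,v\}$ on the pair $(i,j)$ is satisfied. Then $s(u)-s(v)=x_i-x_j=s(u_i)-s(u_j)$. Since the vertices are distinct and $s$ is injective, we have $s(u)\ne s(v)$ and $s(u_i)\ne s(u_j)$. The Sidon property in its difference form therefore gives $u=u_i$ and $v=u_j$. Parallel edges do not occur, so each pair contributes at most one. Consequently a value of at least $m$ is achievable iff a multicoloured clique exists. The converse direction, from a clique to an assignment, is immediate by setting $z=0$ and $x_i=s(u_i)$.

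This is the main obstacle: the multiset semantics and the choice of $W$ must force a clean reading of the assignment. I would handle it exactly as above.

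Now the parameters. The magnitude is $k=O(N^2)$, and $n=c+1$ depends only on $c$, so this is an \FPT-reduction and yields \W{1}-hardness.

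For the ETH bound, suppose an algorithm ran in $2^{o(n\log k+f(n))}$ time. Substituting, this becomes $2^{o(c\log N)+f'(c)}$, that is, $f''(c)N^{o(c)}$, which contradicts Theorem~\ref{thm:mcclique-hard}(2). Formalising the ``$o$'' carefully needs the standard trick: read the little-$o$ as a function that vanishes as $n\log k+f(n)$ grows, and absorb the $f(n)$ term into the $f(c)$ factor. I would include the constraint count only polynomially, since $\abs{C}\cdot W$ appears in the instance size and hence contributes only a polynomial factor.
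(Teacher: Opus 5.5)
Your proposal is correct and follows essentially the same route as the paper. Both use a reduction from \textsc{Multicolor Clique} with one variable per colour plus a zero variable, Sidon-set-valued selection constraints $x_i - z = s(u)$ repeated more than $|E|$ times, and single-copy edge constraints pinned down by the difference form of the Sidon property, followed by the same substitution $n=c+1$, $k=O(N^2)$ for the ETH bound. The only differences are cosmetic: you reason with $x_i - z$ rather than normalising $z=0$, and you orient each edge constraint explicitly.
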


\begin{proof}
We present a reduction from \MCC to an instance of $\maxstp_b$.
Let $(G=(V,E),f_C)$ be an instance of \MCC with $n$ vertices, $m$ edges and $c = |C|$ colors. We begin by introducing variables $x_1,\dots,x_c$ for each color and a zero variable $z$. We construct a Sidon set $S=\{s_1,\dots,s_n\}$ and add the constraints 
$x_{f_C(1)}-z=s_1,  x_{f_C(2)}-z = s_2,  \dots  x_{f_C(n)}-z=s_n,$
which we make costly to delete by repeating them $m+1$ times. Additionally, for each edge $(u,v) \in E$, we add the constraint $x_{f_C(u)}-x_{f_C(v)}=s_{u}-s_{v}$. This completes our reduction. Note that we now have $(m+1)n + m$ constraints.

Let $W := \{z,x_1,\dots,x_c\}$ the resulting set of variables and $D$ the resulting set of constraints. Clearly, $(W,D)$ can be computed in polynomial time. We now verify that $G$ contains a multicolored clique if and only if $((W,D),N)$ is a yes-instance of $\maxstp_b$ where $N := c(m+1) + \binom{c}{2}$.

\smallskip

\noindent
{\em Forward direction.}
Assume $G$ contains a multicolored clique $K$. Let $g \colon C \to K$ denote for each color the vertex from $K$ that has this color. 
Define $h \colon W \to \integers$ such that $h(z)=0$ and $h(x_i)=s_{g(i)}$. We show that $h$ is a solution to the $\maxstp_b$ instance. Each edge $(u,v) \in K$ corresponds to the constraint $x_{f_C(u)}-x_{f_C(v)}=s_{u}-s_{v}$. Since $h(x_{f_C(u)})-h(x_{f_C(v)}) = s_{g(f_C(u))} - s_{g(f_C(v))} = s_u - s_v$, these $\binom{c}{2}$ constraints are satisfied. For each vertex $v \in K$, consider the constraint $x_{f_C(v)}-z = s_v$, which was repeated $m+1$ times. Since $h(x_{f_C(v)}) - h(z) = s_{g(f_C(v))}-0 = s_v$, we also satisfy this, resulting in an additional $c(m+1)$ satisfied constraints. In total, we satisfy at least $N$ constraints.

\smallskip

\noindent
{\em Backward direction.}
Assume $((W,D),N)$ is a yes-instance and let $h \colon W \to \integers$ be a solution. By globally adding a suitable constant, we assume that $h(z)=0$.

For each variable $x_i$, all constraints between $x_i$ and $z$ are disjoint up to the $m+1$ time repetition, hence we satisfy either 0 or $m+1$ such constraints. Since there are only $m$ constraints not of this form, we must satisfy $m+1$ such constraints for each variable; if not, we satisfy at most $(c-1)(m+1) + m < c(m+1) < N$ constraints.
It follows that $x_i$ is assigned a value $s_u$ for some $u \in V$ with $f_C(u) = i$. Let $g \colon C \to V$ be the function mapping each $i$ to the corresponding $u$.

Now consider two variables $x_i$ and $x_j$. As $S$ is a Sidon set, and both $x_i$ and $x_j$ are assigned values from $S$, every constraint of the form $x_{f_C(u)}-x_{f_C(v)}=s_{u}-s_{v}$ with $f_C(u) = i$ and $f_C(v) = j$ is only satisfied if $x_i$ is assigned $s_u$ and $x_j$ is assigned $s_v$. It follows that we can only satisfy at most one such constraint between $x_i$ and $x_j$.

In order to satisfy $c(m+1) + \binom{c}{2}$ constraints in total, we must satisfy exactly one such constraint for each pair $(x_i,x_j)$ as there are no other remaining constraints. This shows that, for each $x_i$ and $x_j$, the constraint $x_i - x_j = s_{g(i)} - s_{g(j)}$ must exist. That is, $g(i)$ and $g(j)$ are connected in the original graph. Hence, the image of $g$ is a multicolored clique.

\smallskip

\noindent
Since the total number of variables in $W$ is $n' := c+1$, the reduction is \FPT. This shows \W{1}-hardness. Since the absolute values of the integers appearing in the constraints do not exceed $k' := s_{n} < 8n^2$, any algorithm solving $\maxstp$ in $2^{o(n'\log k' + f'(n'))}$ time for some $f'$ can be used to solve \MCC in
\[
    2^{o((c+1)\log(8n^2) + f'(c+1))} = 2^{f'(c+1)} \cdot n^{o(c)}
\]
    time. This contradicts the ETH by Theorem~\ref{thm:mcclique-hard}.
\end{proof}

This lower bound has significant consequences for virtually all other parameters.

\begin{theorem} \label{thm:no_graph_parameters}
If $\kappa$ is a computable graph parameter, then $\maxstp$ is \W{1}-hard when parameterized by $\kappa$ of primal graph.
\end{theorem}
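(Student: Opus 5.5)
The plan is to reuse the reduction from the proof of Theorem~\ref{thm:generallowerbound} unchanged and observe that the parameter $\kappa$ of its output is bounded by a function of the number of colours $c$. That reduction maps an instance of \MCC with $c$ colours to an instance $(W,D)$ of $\maxstp_b$ with exactly $c+1$ variables. Its primal graph therefore has exactly $c+1$ vertices: it is the complete graph on $\{z,x_1,\dots,x_c\}$, or a subgraph of it if some colour pair has no edges. Since $\kappa$ is a graph parameter, its value depends only on the isomorphism class of the graph.

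Next, define $g(c) := \max\{\kappa(H) \mid H \text{ a graph on at most } c+1 \text{ vertices}\}$. This maximum ranges over finitely many isomorphism classes, and $\kappa$ is computable, so $g$ is a computable function. We could also pad the output so that the primal graph is always exactly $K_{c+1}$, by adding a harmless constraint $-\infty \le x_i - x_j \le \infty$. That is not allowed because $(a,b)\neq(-\infty,\infty)$. Instead we would add, for each pair, a constraint with a huge bound, say $-L \le x_i-x_j \le L$ with $L$ larger than the Sidon length, satisfied by every assignment used. We would then raise $N$ accordingly. This gives $\kappa' = \kappa(K_{c+1})$, which is plainly computable. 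Either route works, and the maximum over small graphs avoids modifying the instance.

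Finally, check the three conditions of an \FPT-reduction. Correctness is inherited from Theorem~\ref{thm:generallowerbound}, and the reduction runs in polynomial time. For the new parameter we have $\kappa' \le g(c)$. Since \MCC is \W{1}-hard by Theorem~\ref{thm:mcclique-hard}, $\maxstp_b$, and hence $\maxstp$, is \W{1}-hard under $\kappa$.

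The main subtlety lies in the phrase ``computable graph parameter''. The argument needs $\kappa$ to be isomorphism-invariant, so that $g$ is a finite maximum, and needs $g$ to be computable, which follows from computability of $\kappa$. Degenerate parameters such as a constant $\kappa$ do not break the argument. In that case the theorem reads as \W{1}-hardness of $\maxstp$ parameterized by a constant, meaning \W{1}-hardness of the unparameterized problem. This is consistent with the problem being \NP-hard, and \paraNP-hardness would follow anyway.
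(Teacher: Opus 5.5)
Your proposal is correct and follows the paper's proof: reuse the Sidon-set reduction from Theorem~\ref{thm:generallowerbound} and bound $\kappa$ of its $(c+1)$-vertex primal graph by a computable function of $c$. Your bound (the maximum of $\kappa$ over all graphs on at most $c+1$ vertices) is slightly more careful than the paper's value $\kappa(K_{c+1})$, because the primal graph is a full clique only if every colour class is nonempty and every pair of colour classes is joined by an edge, and a general $\kappa$ need not be monotone under taking subgraphs.
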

\begin{proof}
As $\kappa$ is computable, the function $f \colon \naturals \to \naturals$ where $f(k)$ is the $\kappa$ of a clique on $k$ vertices is computable.
Let $G=(V,E)$ be an instance of the \MCC problem with $n'$ variables and $k'$ colors.
We use the same reduction as in the proof of Theorem~\ref{thm:generallowerbound}.
The primal graph of $(W,D)$ is now a clique on $n = k'+1$ vertices hence
$\kappa$ of primal graph is $f(k'+1)$, so the reduction is \FPT.
As \MCC is \W{1}-hard, this gives \W{1}-hardness for $\maxstp$.
\end{proof}

\section{Parameterized Complexity Results}
\label{sec:boundedmagnitude}

Section~\ref{sec:unbounded_magnitude} shows that any reasonable parameterization needs to take the magnitude into account. Thus, we now consider the parameterized complexity with parameter magnitude. 
A straightforward reduction shows the following \paraNP-hardness result.

\begin{theorem} \label{thm:mag-pnp}
   $\maxstp_b$ is \paraNP{}\hy hard when parameterized by ${\sf mag}$.
\end{theorem}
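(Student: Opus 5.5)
By the remark in the preliminaries, \paraNP-hardness follows once we show that $\maxstp_b^{(k)}$ is \NP-hard for some constant $k$. I would give a direct reduction from \textsc{Max-Cut}, which is \NP-hard, and aim for $k=1$.

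Given a graph $G=(V_G,E_G)$ and a target $t$, the construction is as follows. Introduce a zero variable $z$ and one variable $x_v$ for each $v\in V_G$. For each $v$, add the constraint $0 \leq x_v - z \leq 1$ with multiplicity $|E_G|+1$. For each edge $\{u,v\}\in E_G$, add the constraint $1 \leq x_u - x_v \leq 1$ and the constraint $1 \leq x_v - x_u \leq 1$. All bounds lie in $\{-1,0,1\}$, so the magnitude is $1$, and the construction is clearly polynomial. Set the target to $N := |V_G|(|E_G|+1) + t$.

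For correctness, first take a cut $(A,B)$ with at least $t$ crossing edges. Set $z=0$, $x_v=1$ for $v\in A$, and $x_v=0$ for $v\in B$. This satisfies every domain constraint. For each crossing edge, exactly one of its two difference constraints holds, and for a non-crossing edge neither holds. Hence at least $N$ constraints are satisfied.

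Conversely, take a solution $h$ and normalise it so that $h(z)=0$. The two constraints belonging to an edge are mutually exclusive, so the edge constraints contribute at most $|E_G|$ satisfied constraints in total. If some block of repeated domain constraints is violated, then at most $(|V_G|-1)(|E_G|+1)+|E_G| < |V_G|(|E_G|+1) \le N$ constraints are satisfied, which is too few. Hence every $x_v\in\{0,1\}$. An edge constraint is then satisfied exactly when its endpoints receive different values. So the cut $\{v : h(x_v)=1\}$ has at least $t$ crossing edges.

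The only point that needs care is the counting argument in the converse: the repetition factor $|E_G|+1$ must exceed the total number of edge constraints that could possibly be satisfied. With that settled, $\maxstp_b^{(1)}$ is \NP-hard, and $\maxstp_b$ is therefore \paraNP-hard parameterized by ${\sf mag}$.
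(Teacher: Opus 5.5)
Your reduction is correct, but it takes a different route from the paper. The paper reduces from \textsc{Maximum Acyclic Subgraph}: each arc $vw$ becomes the single constraint $v - w \leq -1$. Correctness is then immediate, because a set of arcs is acyclic exactly when the corresponding strict-precedence constraints admit an integer (topological) ordering. That construction needs no zero variable, no repeated constraints and no threshold argument.

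Your reduction from \textsc{Max-Cut} has to simulate a Boolean domain. This costs you the zero variable $z$, the heavily repeated block $0 \leq x_v - z \leq 1$, and the counting step that forces every $x_v \in \{0,1\}$. It also costs you two mutually exclusive constraints per edge, since $x_u \neq x_v$ is not a single STP relation. Your counting silently uses $t \geq 0$, which is harmless.

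In exchange, your instances use only finite-endpoint relations ($B_{0,1}$ and $B_{1,1}$). So you show that $\maxstp_b^{(1)}$ remains \NP-hard even without half-infinite relations. The paper's construction genuinely depends on the unbounded relation $v - w \leq -1$: capping it with a finite upper bound large enough to accommodate all acyclic orderings would push the magnitude up to about $|V|$.
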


\begin{proof}
We use
the following \NP-hard problem~\cite{Ka72}.

\pbDef
{\textsc{Maximum Acyclic Subgraph} (MAS)}
{A directed graph $D = (V,A)$ and an integer $m$.}
{Is there a set $X \subseteq A$
such that $|X| \geq m$ and $(V, X)$ is acyclic?}

Let $((V,A),m)$ be an arbitrary instance of MAS.
Construct an instance $((V,C),m)$ of $\maxstp_b$ where each arc $vw \in A$
is replaced by the constraint $v - w \leq -1$.
This constraint can be viewed as enforcing that $f(v) < f(w)$
for any solution $f$.
It is clear
that $((V,A),m)$ is a yes-instance if and only if $((V,C),m)$ is
a yes-instance. Furthermore, $\magni{C} \leq 1$ so $\maxstp_b$ is 
indeed \paraNP{}\hy hard when parameterized by magnitude. 
\end{proof}

Hence, the magnitude does not help in itself, but it can be combined with other parameters. We explore this in Section~\ref{sec:vertexcover} where we obtain a positive \FPT\ result based on vertex cover, and in Section~\ref{sec:treewidth-algo} where we consider treewidth. The latter parameterization turns out to be \W{1}-hard but can be used for an \XP algorithm.

\subsection{Vertex Cover}
\label{sec:vertexcover}

A {\em vertex cover} of a (directed or undirected) graph is a set of vertices that includes at least one endpoint of every edge. Given an instance $I=(V,C)$
of $\csp$, we let $\vc(I)$ denote the size of the smallest vertex
cover of the primal graph of $I$.
We exhibit a connection between the structure of \probstp\ instances
and the size of solution domains.
Consider the directed edge-weighted \emph{distance graph} $\Delta_I = (V, A, w)$.
This graph is constructed by for each constraint $a \leq x - y \leq b$ in $C$,
adding arcs $(x,y)$, $(y,x)$ to $A$ with weights $w(x,y) = b$, $w(y,x) = -a$.
Arcs of infinite weight are not added to $A$. 
Dechter et al.~\cite[Theorem~3.1]{Dechter:etal:ai91})
have noted the following: an
  instance $I$ of $\csp{(\S)}$ is satisfiable if and only if 
  $\Delta_I$ contains no directed cycle of total negative weight.

\begin{lemma}\label{lem:stp-smallsolution-diameter}
    Let $I = (V,C)$ be an instance of $\stp_b$ where $k = \magni{C}$ is its magnitude and $d$ is the length of
the longest simple path
    in the distance graph $\Delta_I$.
     If $I$ has a solution, it has one where each value comes from the set
    $\left\{0, 1, 2,\ldots, dk\right\}$.
\end{lemma}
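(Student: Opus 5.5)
Since $I$ is an instance of $\stp_b$ it has only binary constraints, so I would use the standard shortest-path construction of a solution from the distance graph $\Delta_I$. By the result of Dechter et al.\ quoted above, if $I$ is satisfiable then $\Delta_I$ has no negative directed cycle. Shortest-path distances in $\Delta_I$ are then well defined, and the shortest walk between any two vertices can be taken to be a simple path. To remove the dependence on a single source, I would add an auxiliary vertex $s$ with arcs of weight $0$ from $s$ to every $v \in V$. This creates no negative cycle, because $s$ has no incoming arcs.

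Let $\delta(v)$ be the shortest-path distance from $s$ to $v$. The triangle inequality gives $\delta(y) \le \delta(x) + w(x,y)$ for every arc $(x,y)$. I then orient the assignment so that it satisfies the constraints. An arc $(x,y)$ of weight $b$ comes from a constraint with $x - y \le b$, so I need an assignment $\varphi$ with $\varphi(x) - \varphi(y) \le w(x,y)$ for every arc. The choice $\varphi(v) = -\delta(v)$ works, since $-\delta(x) + \delta(y) \le w(x,y)$. Similarly the arc $(y,x)$ of weight $-a$ gives $\varphi(y) - \varphi(x) \le -a$, i.e.\ $x - y \ge a$. So $\varphi = -\delta$ is a solution.

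Next I bound the range of $\varphi$. Every $\delta(v)$ is at most $0$ because of the direct arc from $s$, so $\varphi(v) \ge 0$. A shortest path from $s$ to $v$ is either the direct arc, giving $\delta(v)=0$, or the arc $s \to u$ followed by a simple path in $\Delta_I$ from $u$ to $v$. That simple path has at most $d$ arcs, each of weight at least $-k$ since $|w| \le k$, so $\delta(v) \ge -dk$. Hence every value of $\varphi$ lies in $\{0,1,\dots,dk\}$, and all values are integers because all weights are integers.

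The step that needs care is justifying that the shortest paths can be taken simple with at most $d$ arcs, where $d$ counts arcs. Absence of negative cycles allows zero-weight cycles to be deleted without increasing the weight, and that is enough. Here I read the ``length'' $d$ of a path as its number of arcs, which is the reading that yields the bound $dk$. I also have to check two conventions: that infinite bounds produce no arcs and so are harmless, and that the direction of the arcs in the construction of $\Delta_I$ matches the sign choice $\varphi = -\delta$. If the paper's convention is the opposite one, I would use $\delta$ from the reversed graph instead; the bound is unchanged.
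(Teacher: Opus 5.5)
Your proof is correct and follows essentially the paper's argument. Both adjoin an auxiliary vertex with weight-$0$ arcs, take shortest-path distances (well defined because $\Delta_I$ has no negative cycle), negate them, and bound the range by $dk$, since each shortest path uses a simple path of at most $d$ arcs, each of weight at least $-k$. The only difference is that you attach the auxiliary vertex as a source ($s\to v$), whereas the paper attaches it as a sink ($v\to z$).

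Your caution about orientation is well placed. In your source version, $\varphi=-\delta$ really satisfies every constraint. In the paper's sink version, the triangle inequality $f(x)\le w(x,y)+f(y)$ shows that the solution is $f$ itself shifted by $dk$ (using translation invariance), not $-f$ as the paper states; this shifted $f$ takes values in $\{0,\dots,dk\}$.
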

\begin{proof}
    Consider the distance graph $\Delta_I$. Since $I$ has at least one solution, $\Delta_I$ has no directed cycles of negative weight. Modify $\Delta_I$ by adding a zero vertex $z$ together with an arc of weight zero from each other vertex to $z$ and let $f \colon V \to \integers$ be the function where, for each $x \in V$, $f(x)$ is the length of the shortest path from $x$ to $z$, measured by arc weight. This shortest distance exists since $\Delta_I$ has no directed cycles of negative weight. Note that $f$ is always non-positive since each shortest path is not longer than the path obtained by directly following the arc towards $z$. We now show that $-f$ is a solution satisfying the lemma.
    
    First, $-f$ is a solution to the instance: if some constraint $x-y \leq c$ is not satisfied, then the path from $x$ to $z$ could be made shorter by going via $y$. Furthermore, $-f$ is at least zero since $f$ is non-positive. Finally, $-f$ is at most $dk$ since each shortest path consists of at most $d$ arcs and each arc decreases the total length by at most $k$. This completes the proof.
\end{proof}

A consequence of Lemma~\ref{lem:stp-smallsolution-diameter}
(which we will use in Section~\ref{sec:treewidth-algo})
is that solvable \probstp\ instances $(V,C)$
have a solution that only uses values from the set 
$\left\{0, 1, 2,\ldots, (n-1)k\right\}$ where $n=|V|$ and $k=\magni{C}$. 

\begin{theorem}
    $\maxstp_b$ can be solved in $\bigoh^*((2\vc \cdot k)^\vc n)$, where $n$ is the number of vertices and $k$ the magnitude. In particular, it is \FPT\ when parameterized by ${\sf mag} + \vc$.
\end{theorem}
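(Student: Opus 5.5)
Let $X \subseteq V$ be a minimum vertex cover of the primal graph, computed in $O^*(1.28^{\vc})$ time (or $O^*(2^{\vc})$ by brute-force branching). Then $Y = V \setminus X$ is an independent set, so every constraint either lies inside $X$ or joins one vertex of $Y$ to one vertex of $X$. We fix an optimal consistent subset $C^* \subseteq C$ and reason about a solution of $(V,C^*)$ obtained from Lemma~\ref{lem:stp-smallsolution-diameter}. The plan has three steps: bound the values of the cover variables, guess them, and then solve each independent variable separately.

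\textbf{Bounding the values on $X$.} In the distance graph $\Delta_{(V,C^*)}$ every arc has an endpoint in $X$. Consequently a simple path alternates through $X$ at least every second vertex. Since it visits each vertex of $X$ at most once, it has at most $2\vc$ arcs; more precisely at most $2\vc+1$, and we absorb the constant into the bound. Lemma~\ref{lem:stp-smallsolution-diameter} with $d \le 2\vc+1$ then gives a solution of $(V,C^*)$ whose values lie in $\{0,\dots,(2\vc+1)k\}$. In particular the restriction to $X$ takes one of at most $(2\vc\cdot k + O(k))^{\vc}$ possible forms. The zero-shift normalisation in the lemma already fixes translation, so nothing further is needed there. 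This step is where I expect the most care: the path-length bound must be justified for the optimal subset, not for $C$. That is fine because the subgraph of $\Delta_{(V,C^*)}$ still has all arcs incident to $X$. We also need that shifting everything by the lemma's construction keeps the values nonnegative and bounded, which the lemma already provides.

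\textbf{Guess and solve locally.} We enumerate all assignments $\alpha \colon X \to \{0,\dots,(2\vc+1)k\}$. For a fixed $\alpha$, the constraints inside $X$ are simply checked and counted. Each $y \in Y$ interacts only with $X$, and each constraint $a \le y - x \le b$ or $a \le x-y\le b$ with $x\in X$ becomes an interval constraint on $y$; unary constraints on $y$ are intervals as well. We must choose a value for $y$ maximising the number of satisfied intervals, which is a one-dimensional stabbing problem. It is solvable in polynomial time by sweeping over the interval endpoints, or by trying the $O(\deg(y))$ candidate endpoint values. The choices for different $y$ are independent, so the total for $\alpha$ is the sum of the local optima plus the count inside $X$. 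We accept iff some $\alpha$ reaches $m$.

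\textbf{Correctness and running time.} Soundness is immediate because every $\alpha$ together with the local choices is a genuine assignment. Completeness follows from the first step: the solution of $C^*$ restricted to $X$ is one of the enumerated $\alpha$. The local optimisation then satisfies at least as many constraints as that solution does, so the algorithm finds a value of at least $|C^*| \ge m$. The running time is $(O(\vc\cdot k))^{\vc}$ guesses times polynomial work, linear in $n$ for the $Y$-part up to degree factors, giving the claimed $O^*((2\vc\cdot k)^{\vc} n)$ bound. This yields \FPT\ for ${\sf mag}+\vc$.
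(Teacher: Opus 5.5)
Your proposal is correct and follows the paper's proof. Like the paper, you bound the longest simple path in the distance graph by roughly $2\vc$, invoke Lemma~\ref{lem:stp-smallsolution-diameter} to restrict values to $\{0,\dots,(2\vc+1)k\}$, enumerate the values on the cover, and optimise each independent-set variable separately. Your two refinements are both sound:
\begin{itemize}
\item You apply the lemma explicitly to the optimal consistent subset $C^*$, which the paper leaves implicit.
\item You solve each local problem by interval stabbing instead of trying all $|A|$ candidate values. This avoids the extra factor $|A|$ that the paper silently absorbs into $O^*$, a factor that is not polynomial in the input size when $k$ is encoded in binary.
\end{itemize}
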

\begin{proof}
    A graph with vertex cover $\vc$ has a longest simple path of length at most $2\vc+1$: if there exists a longer path, then this path must contain two adjacent vertices that are not part of the vertex cover, and thus it does not cover all edges. Since the primal graph and the distance graph have the same vertex cover, Lemma~\ref{lem:stp-smallsolution-diameter} implies that 
    a solvable instance has a solution with values from $A = \{0,  \ldots, (2\vc+1)k\}$.
    
    We now describe our algorithm. Let $W$ with $\abs{W} = \vc$ be a vertex cover. For each variable in $W$,  guess a value from $A$. This results in ${\abs{A}}^\vc$ branches. In each branch, we count how many constraints between vertices from $W$ are satisfied.
    We then determine for every other variable $v$ the largest number of constraints between $v$ and $W$ that can be satisfied by trying all $\abs{A}$ options for $v$. Since there are no edges between vertices outside of $W$, these choices are all independent, hence we find the optimal value in this branch. By computing the maximum number of satisfied constraints over all branches, we find a solution to the instance.
    The overall runtime is
    \[\bigoh\left(\abs{A}^{\vc+1}n\right) = \bigoh\left(((2\vc+1)k)^{\vc+1} n\right) = \bigoh^*\left((2\vc k)^\vc n\right).\]
\end{proof}

\subsection{Treewidth}
\label{sec:treewidth-algo}

We show that $\maxstp_b^{(k)}$ is \W{1}-hard when parameterized by treewidth of primal graph for every $k \geq 1$ ($\maxstp_b^{(0)}$ is trivial since every constraint can be satisfied by assigning zero to all variables).
Nevertheless, treewidth can be used to solve $\maxstp$ faster, and we construct an \XP algorithm running in roughly $(nk)^{\tw}$ time where $k = \magni{C}$ and $n = |V|$.

\subsubsection{\W{1}-hardness}

Our \W{1}-hardness result is based on the proof of Theorem \ref{thm:generallowerbound}.

\begin{theorem} \label{thm:stp-w1hard}
$\maxstp_b^{(1)}$ is \W{1}-hard when parameterized by $\tw$.
\end{theorem}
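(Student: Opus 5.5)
We reduce from \MCC, reusing the construction in the proof of Theorem~\ref{thm:generallowerbound} and replacing every constraint of large magnitude by a gadget that uses only magnitude-$1$ relations. Concretely, a constraint $x - y = d$ with $d > 0$ is simulated by a path $x = p_0, p_1, \dots, p_d = y'$ of fresh variables, where each consecutive pair carries $p_{i+1} - p_i = 1$. If this chain is heavily weighted, it forces $p_d - p_0 = d$, and we then glue the path endpoint to the intended variable. Each long constraint thus becomes a path. Since we add new vertices only along paths attached to the original $c+1$ variables, the primal graph is the original clique on $\{z, x_1, \dots, x_c\}$ with subdivided edges. This is not yet bounded treewidth, since subdividing a clique keeps treewidth around $c$, which is already fine: treewidth stays at most $c+1$ or so, because subdivision does not increase treewidth beyond that of the clique. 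So the parameter remains bounded by a function of $c$.

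The steps are as follows. (1) Start from the \MCC instance and the Sidon set $S$ of length $O(n^2)$. (2) For the vertex constraints $x_i - z = s_u$, with several candidates $u$ per color: a single path cannot encode a disjunction. Instead, for each color $i$, build one long chain $z = q^i_0, q^i_1, \dots, q^i_L$ with hard constraints $0 \le q^i_{j+1} - q^i_j \le 1$ kept for free, or with $q^i_{j+1} - q^i_j = 1$ repeated heavily, so that $q^i_j = j$ is forced. Then add a soft unary-like constraint $x_i - q^i_{s_u} \in [0,0]$, weighted $m+1$, for each $u$ of color $i$. Satisfying $m+1$ copies forces $x_i = s_u$ for exactly one $u$. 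Each $x_i$ is joined to $O(n)$ chain vertices, but the union of the chain for $i$ and $x_i$ has pathwidth $2$. (3) For the edge constraints $x_i - x_j = s_u - s_v$, which carry weight $1$: build, for each pair $\{i,j\}$, a ruler chain of forced differences anchored at $x_j$, namely $r^{ij}_0 = x_j, \dots$ with forced steps of $\pm 1$ covering $[-L, L]$. Then add the soft constraint $x_i - r^{ij}_{s_u - s_v} = 0$ for each edge $uv$. (4) Set the threshold to the weight of all forced constraints plus $c(m+1) + \binom{c}{2}$, with forced constraints repeated more times than the total soft weight, so that every solution satisfies them all. The correctness argument is then verbatim the Sidon argument of Theorem~\ref{thm:generallowerbound}. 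All magnitudes are at most $1$.

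The treewidth bound is the step I expect to need care. Take a tree decomposition whose central bag is $\{z, x_1, \dots, x_c\}$. Each chain is a path whose vertices are adjacent only to their neighbours and to at most two of the central variables ($z$ or $x_i$, and $x_i, x_j$ for rulers). Hang each chain off the central bag as a path decomposition with bags $\{z, x_i, x_j\} \cup \{r_t, r_{t+1}\}$. This gives width $O(c)$, so the reduction is an \FPT-reduction and \W{1}-hardness follows from Theorem~\ref{thm:mcclique-hard}. The main obstacle is making the forced chains truly forced with polynomial blow-up: repetition counts must exceed all soft weight, which the multiset semantics permits, and the threshold must be computed exactly so that no trade-off between forced and soft constraints is profitable.
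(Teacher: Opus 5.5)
Your reduction is correct, but it takes a different route from the paper. The paper subdivides every constraint $x-y=a$ with $|a|\ge 2$ of the instance from Theorem~\ref{thm:generallowerbound} into its own path of $|a|$ \emph{soft} unit-difference constraints, with no extra repetition. This includes each of the $m+1$ copies of every vertex constraint. For fixed values of the original variables, such a path is fully satisfiable iff the original constraint holds; otherwise all but exactly one of its edges can be satisfied. So the optimum shifts by exactly $\sum(|a|-1)$ and the Sidon argument is untouched. The decomposition is the central bag plus, along each path, bags made of two consecutive path vertices and the far endpoint.

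Your remark that ``a single path cannot encode a disjunction'' rests on a misreading. The original instance has no disjunctive constraint: each candidate value $s_u$ of $x_i$ is its own (repeated) soft constraint. So the naive subdivision in your first paragraph already suffices, and it is exactly the paper's proof.

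Your alternative is nonetheless sound. It uses heavily repeated unit-step rulers hanging from $z$ (one per color) and from $x_j$ (one per color pair), plus magnitude-$0$ soft taps $x_i=q^i_{s_u}$ and $x_i=r^{ij}_{s_u-s_v}$. Each ruler hangs from a single original variable, so all forced constraints are satisfiable together with any values of $z,x_1,\dots,x_c$. With repetition exceeding the total soft weight $(m+1)n+m$, they must all hold at your threshold. Then each tap holds exactly when the corresponding original constraint does, so correctness is verbatim that of Theorem~\ref{thm:generallowerbound}, and your bags give width $\max(c,4)$. Your version avoids the per-path ``one broken edge'' accounting and shares rulers across constraints, at the price of the repetition machinery.

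One thing to clean up: discard the option $0\le q^i_{j+1}-q^i_j\le 1$ ``kept for free''. \maxstp\ has no free constraints, and these relations do not pin $q^i_j=j$: all ruler vertices and $x_i$ could be $0$, satisfying every tap of color $i$ at once. Only the heavily repeated $q^i_{j+1}-q^i_j=1$ version works.
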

\begin{proof}
    Recall the reduction from Theorem \ref{thm:generallowerbound}. It results in a \W{1}-hard instance of $\maxstp_b$ whose primal graph is a clique on $c+1$ vertices and each edge corresponds to an equality constraint of magnitude at most $8n^2$, where $c$ is the parameter and $n$ is bounded by instance size. There can be multiple edges between the same vertices, but there are $\bigoh(mn)$ edges in total, where $m$ is again bounded by instance size.

    We transform this instance into one with magnitude 1 by subdividing edges. Consider an edge $(x,y)$ corresponding to a constraint $f(x)-f(y) = a$ with $\abs{a} \geq 2$. Assume without loss of generality that $a$ is positive. We replace this edge with a path of $a$ edges on vertices $x_0,x_1,\ldots,x_a$ where $x_0=x$, $x_a=y$, and $x_1,\ldots,x_{a-1}$ are fresh variables. Each edge $(x_{i-1},x_i)$ is given the constraint $f(x_{i-1})-f(x_i) = 1$. The result is a graph on $\bigoh(mn\cdot n^2)$ vertices with magnitude 1.

    This new instance is equivalent to the original instance: an optimal solution to the new instance will break at most one edge in each path, and breaking one edge in a path is equivalent to breaking the original high-magnitude edge.

    Finally, we show that the resulting graph has bounded treewidth by giving a tree decomposition of size $\max(2,c)$, which implies that the reduction is \FPT\ and hence that $\maxstp_b^{(1)}$ is \W{1}-hard when parameterized by $\tw$. We begin with one bag $B$ containing all $c+1$ vertices from the original clique. For each path $x_0,x_1,\ldots,x_a$, we add a path of bags $B_1,\ldots, B_a$ with $B_i := \{x_{i-1},x_i,y\}$ for $1 \leq i \leq a$ and where $B_1$ is adjacent to $B$. This results in a valid tree decomposition. The size of the largest bag is $\max(3,c+1)$, hence the width is $\max(2,c)$, which completes the proof.
\end{proof}

\subsubsection{\XP Algorithm}

We continue with our \XP algorithm. 

\begin{theorem}\label{thm:algorithm-longest-path}
    $\maxstp_b$ can be solved in $\bigoh^*((nk)^{\tw})$, where $n$ is the number of vertices, $k$ the magnitude, and $\tw$ the treewidth of the distance graph $\Delta_I$.
\end{theorem}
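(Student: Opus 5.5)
Since a \maxstp\ solution keeps a consistent subset $X \subseteq C$ of constraints, I plan to first bound the domain and then run standard tree-decomposition dynamic programming over a finite domain. The remark after Lemma~\ref{lem:stp-smallsolution-diameter} does exactly this. Let $X$ be an optimal kept subset. Then $(V,X)$ is a satisfiable \probstp\ instance with magnitude at most $k$, and every simple path in its distance graph has at most $n-1$ arcs. So $(V,X)$ has a solution with values in $D=\{0,1,\ldots,(n-1)k\}$. Conversely, any assignment $\varphi\colon V\to D$ satisfies some subset of $C$, and that subset is consistent. Therefore the optimum of the \maxstp\ instance equals $\max_{\varphi\colon V\to D}$ of the number of constraints satisfied by $\varphi$. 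This turns the problem into a finite-domain $\maxcsp$ with $|D| = (n-1)k+1 = O(nk)$ values.

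Next I would compute a tree decomposition of the distance graph $\Delta_I$ of width $\tw$. Every constraint with a finite bound yields an arc of $\Delta_I$, and (viewing $\Delta_I$ as undirected) this graph coincides with the primal graph. A constraint $-\infty\le x-y\le\infty$ is excluded by definition, so every constraint scope is covered by some bag. To obtain the decomposition I would either use Bodlaender's algorithm~\cite{Bodlaender:sicomp96}, whose $f(\tw)$ factor is absorbed since we only need \XP, or assume it is given. I then convert it to a nice tree decomposition with introduce, forget and join nodes, and assign each constraint (with multiplicity) to one bag containing its scope, for example the topmost such bag.

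The DP table at node $t$ is indexed by assignments $\alpha\colon\chi(t)\to D$. It stores the maximum number of constraints assigned to nodes in the subtree of $t$ that are satisfied by some extension of $\alpha$ to the vertices of that subtree. The node types are handled as follows.
\begin{itemize}
\item At a leaf, the value is $0$.
\item At an introduce node, we add the number of constraints assigned there that $\alpha$ satisfies.
\item At a forget node for $v$, we maximize over the value of $v$.
\item At a join node, we add the two children's values.
\end{itemize}
Correctness follows from the usual separator property of tree decompositions. Each table has $|D|^{\tw+1}$ entries, and each entry is computed in polynomial time, with a forget node costing an extra factor $|D|$. This gives a total of $O^*(|D|^{\tw+1}) = O^*((nk)^{\tw+1})$. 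The extra factor $nk$ is polynomial in the input, since $n$ and $k\le 2^{\|I\|}$… but $k$ may be exponential in the bit size.

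The main obstacle is therefore exactly this exponent bookkeeping: one factor $nk$ is not obviously polynomial, because $k$ is given in binary. My first fix is to exploit translation invariance. Within each connected component, fix one variable of the root bag to $0$, and let the other values range over $\{-(n-1)k,\ldots,(n-1)k\}$. This is valid because in the shortest-path solution of Lemma~\ref{lem:stp-smallsolution-diameter} all values lie in $[0,(n-1)k]$, so all differences lie in that symmetric range. Fixing that variable removes one free variable from the root bag. Pushing this through the DP, by fixing the value of one designated bag vertex relative to another and tracking only the remaining $\tw$ coordinates, should give $O^*((nk)^{\tw})$. If that proves fiddly, I would settle for $O^*((nk)^{\tw+1})$, which still gives \XP\ membership, and note that the $O^*$ in the statement absorbs the extra factor whenever $k$ is polynomially bounded.
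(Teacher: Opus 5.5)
\paragraph{Comparison with the paper's proof.} Your main argument is the paper's proof. You restrict to the window $\{0,\dots,(n-1)k\}$ via Lemma~\ref{lem:stp-smallsolution-diameter}, and your argument that maximizing over assignments into this window gives the optimum is exactly what is needed. You then run leaf/introduce/forget/join dynamic programming over a nice tree decomposition, with tables indexed by bag assignments.

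\paragraph{The exponent gap.} Your count $O^*((nk)^{\tw+1})$ is what this dynamic program actually costs, and the exponent problem you flag is real. The paper ends with $\bigoh(\abs{S}^{\tw+1}(n+m))=\bigoh((nk)^{\tw})$, which drops a factor $\abs{S}=\Theta(nk)$. Since $k$ is written in binary, $O^*$ cannot absorb that factor. So your fallback proves what the paper's argument supports, but neither establishes the stated bound.

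\paragraph{A small slip.} You assign each constraint to the topmost bag containing its scope. That node is usually a join or forget node, not an introduce node. You should therefore count the satisfied constraints of $C_t$ at every node, as the paper does.

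\paragraph{Why your repair fails as written.} There are three problems.
\begin{itemize}
\item Fixing one variable to $0$ globally only removes a coordinate from the bags that contain that variable.
\item Tables ranging over extensions into a fixed window are not translation-invariant, so you cannot simply quotient them by translation.
\item Letting each offset range independently over $\{-(n-1)k,\dots,(n-1)k\}$ gives $(2(n-1)k+1)^{\tw}$ states per bag. That is an extra factor $2^{\tw}$, which is not polynomial.
\end{itemize}

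\paragraph{A repair that works.} Let $L=(n-1)k$. Define $DP(t,\alpha)$ as a maximum over all integer extensions of $\alpha$ to $\chi(T_t)$ such that every bag in $T_t$ has spread $\max-\min\le L$.
\begin{itemize}
\item This restriction is translation-invariant. It admits the solution of Lemma~\ref{lem:stp-smallsolution-diameter} for an optimal kept set, and it never overcounts, since any assignment satisfies a consistent subset.
\item All constraints of $\maxstp_b$ are difference constraints, so $DP(t,\alpha+c)=DP(t,\alpha)$. You may therefore store only the representative with $\min\alpha=0$.
\item A bag of size $s\le\tw+1$ then has at most $(L+1)^s-L^s\le(\tw+1)(L+1)^{\tw}$ states.
\item Introduce and join nodes only renormalize. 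A forget node has a parent bag of size at most $\tw$, so at most $\tw(L+1)^{\tw-1}$ states, and it tries at most $2L+1$ positions for the forgotten vertex.
\end{itemize}
Each node thus costs $O^*((L+1)^{\tw})$, and the total is $O^*((nk)^{\tw})$.

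\paragraph{Computing the decomposition.} Bodlaender's $2^{O(\tw^3)}$ factor is fine for \XP\ membership but is not within $(nk)^{\tw}$. Either assume the decomposition is given, as the paper does, or compute it in $O(n^{\tw+2})$ time with the Arnborg--Corneil--Proskurowski algorithm.
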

\begin{proof}
    We begin with a definition.
        Let $T$ be a tree decomposition. We say that $T$ is \emph{nice} if it is rooted using some root $r$ with $\chi(r) = \emptyset$, and each node $t$ has one of the following types:
        \begin{itemize}
            \item \emph{leaf node}: $\chi(t) = \emptyset$ and $t$ has no children.
            \item \emph{introduce node}: $t$ has one child $t'$, and there exists a vertex $v \in \chi(t)$ such that $\chi(t') = \chi(t) \setminus \{v\}$.
            \item \emph{forget node}: $t$ has one child $t'$, and there exists a vertex $v \in \chi(t')$ such that $\chi(t) = \chi(t') \setminus \{v\}$.
            \item \emph{join node}: $t$ has two children $t_1,t_2$ and $\chi(t) = \chi(t_1) =bi \chi(t_2)$.
        \end{itemize}
    Any tree decomposition can be modified into a nice tree decomposition of the same width in linear time. For a node $t \in T$, we define $T_t$ as the subtree rooted at $t$, and define $\chi(T_t) = \bigcup_{t' \in T_t}\chi(t')$ as the set of all variables occurring in this subtree~\cite{BodlaenderKoster08}.

    We use standard dynamic programming over a tree decomposition ~\cite{SamerSzeider10}. Let $T$ be a nice tree decomposition of the primal graph $\Delta_I$ with width $\tw$. By a defining property of tree decompositions, for each constraint $C$, there exists at least one node whose bag contains all variables occurring in $C$. However, there may be several such nodes. To avoid double counting, we assign $C$ to one such node, chosen arbitrarily. For each node $t$, we let $C_t$ denote the constraints associated with $t$.

    By Lemma \ref{lem:stp-smallsolution-diameter}, noting that any simple path has length at most $n-1$, we only need to look for solutions from the set $S = \left\{0, 1, 2,\ldots, (n-1)k\right\}$. We now wish to compute the following information: for each node $t$, and each possible assignment $f\colon \chi(t) \to S$, what is the largest number of constraints that can be satisfied in any extension of $f$ into an assignment $f'\colon \chi(T_t) \to S$ where we only consider constraints that are associated with nodes from $T_t$? Let $DP(t,f)$ denote this value.

    Given $t$ and $f$, we note that $DP(t,f)$ equals the number of constraints in $C_t$ satisfied by $f$ plus the following depending on the type of $t$.
    \begin{itemize}
        \item \emph{leaf node}: $T_t$ contains no other constraints, hence we add nothing.
        \item \emph{introduce node}: let $t'$ be the child node. Any extension of $f$ must extend the restriction $\left.f\right|_{\chi(t')}$, hence we add the value $DP(t',\left.f\right|_{\chi(t')})$.
        \item \emph{forget node}: let $t'$ be the child node and $v$ the forgotten variable. For each $i \in S$, let $f_i$ be the extensions of $f$ into an assignment of $\chi(t')$ where $f_i(v) = i$. Since any extension of $f$ in $\chi(T_t)$ extends some $f_i$, we add the maximum $\max_{i \in S}(DP(t',f_i))$.
        \item \emph{join node}: let $t_1$ and $t_2$ be the two child nodes. Since any extension of $f$ to $\chi(T_t)$ can be split into two independent extensions $f_1$ in $\chi(T_{t_1})$ and $f_2$ in $\chi(T_{t_2})$, we add the sum $DP(t_1,f) + DP(t_2,f)$.
    \end{itemize}
    Finally, since the root $r$ is empty, $DP(r,\emptyset)$ holds the answer to the entire instance. For all nodes $t$ except forget nodes, there are at most $\abs{S}^{\tw+1}$ states and each state can be computed in $O(\abs{C_t})$ time. For forget nodes $t$, each state requires $O(\abs{S} + \abs{C_t})$ time, but we have $\abs{\chi(t)} < \tw+1$ since the child node is larger and has size at most $\tw+1$, hence computing all states also requires $\bigoh(\abs{S}^{\tw+1} \abs{C_t})$ time. Since there are $\bigoh(n)$ nodes total, and all $C_t$ sum to $m$, the total runtime is $\bigoh(\abs{S}^{\tw+1}(n+m)) = \bigoh((nk)^\tw)$.
\end{proof}

\section{A Comparison with Qualitative Reasoning} \label{sec:qualitative}

Qualitative reasoning is an influential subarea of AI where quantitative (e.g., numerical) relations are
avoided in favour of qualitative relations. 
Well-known qualitative formalisms include Allen's Interval algebra~\cite{allen1983maintaining},
the spatial RCC formalisms~\cite{Randell:etal:kr92}, and various cardinal direction calculi~\cite{Frank:ogai91,Goyal:PhD,Ligozat:vlc98}.
The intersection between qualitative reasoning and CSPs has been intensively studied, generating a large number 
of formalisms and results~\cite{Bodirsky:Jonsson:jair2017,Dylla:etal:acmsurv2017}.
Clearly, \probstp\ is 
is {\em not} qualitative since  it is profoundly based in relations between numerical values.
It is thus interesting to 
make comparisons between the \maxcsp\ for qualitative
formalisms and \maxstp. 

One approach~\cite{ijcai2023p212,ErikssonLagerkvist2023} for solving spatio-temporal CSPs is to construct the respective spatio-temporal orders using branching and merging two nodes in the branching tree if for all expansions of the respective partial orders corresponding to the nodes to total orders, either both total orders correspond to satisfying assignments or neither.
We find that dynamic programming (DP) approaches exploiting this idea are often generalizable to \maxcsp.
Furthermore, Dabrowski et al.~\cite{Dabrowski:etal:ai2023} have proven that $\csp(\A)$ for constraint languages with the {\em patchwork} property (see~\cite{Lutz:Milicic:jar2007})
 are \FPT\ parameterized by the treewidth of the primal graph. Patchwork states that the union of two
satisfiable \csp\ instances whose constraints agree on their common
variables is satisfiable. 
This result generalizes to \maxcsp, and
shows, for instance, that \maxcsp(Allen) and \maxcsp(RCC8) are in \FPT\ when parameterized by $\tw$. We conclude the following.
\begin{obs*} 
Let $\A$ be a finite
constraint language with 
jointly-exhaustive and pairwise-disjoint relations such that CSP$(\A)$ is decidable. 

\smallskip

\noindent
\textbf{(1).}
    If $\csp(\A)$ is decidable in a time $f(n)$ for some $f(n) \in \Omega^*(\exp(n))$ using DP and a search tree such that 
    \begin{itemize}
        \item any two nodes in the tree that are equivalent to each other ($v \sim w$ are merged in the branching procedure) are also equivalent to each other with respect to the number of satisfied constraints, i.e. appending the same path to both nodes will result in the same number of constraints being satisfied by the solutions of the respective leaves,
        \item the number of constraints satisfied by each node is computable in polynomial time from parent nodes,
        \item for each assignment of variables, there is a corresponding node in the search tree (in particular also unsatisfying assignments),
    \end{itemize}
    then $\maxcsp(\A)$ is solvable in time $O^*(f(n))$.

    \smallskip

    \noindent
    \textbf{(2).} If $\A$ has the patchwork property and $\B$ is a finite structure whose relations are
Boolean combinations of relations in $\A$ (i.e. relations definable
by quantifier-free formulas that only contain the relations in~$\A$), then the problem
$\maxcsp (\B)$ is \FPT\ parameterized by the treewidth of the primal graph.
\end{obs*}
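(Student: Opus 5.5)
We prove the two parts of the Observation separately. Both follow the same pattern: the existing decision algorithm already enumerates or summarizes all the relevant partial solutions, and we replace its Boolean bookkeeping ("is this partial object consistent?") with an integer counter ("how many constraints does it satisfy?"). Throughout, $\A$ consists of jointly-exhaustive and pairwise-disjoint (JEPD) relations. Hence every assignment $V \to D$ induces exactly one atomic relation on each pair (or tuple) of variables. We can therefore identify an assignment, up to the constraints it satisfies, with a complete atomic network, that is, a choice of one atomic relation per tuple that is itself satisfiable.

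For \textbf{(1)} we run the given DP over the search tree. Each node now stores the maximum number of constraints satisfied by any partial assignment it represents, instead of a consistency flag. The third hypothesis says that unsatisfying assignments also have nodes. So every assignment of $V$, and hence every candidate set $X$ of satisfied constraints, is realised by some leaf. The second hypothesis lets us compute a child's count from its parent's count in polynomial time. When two equivalent nodes $v \sim w$ are merged, we keep the larger count. This is sound by the first hypothesis: every common extension adds the same number of satisfied constraints to both, so the best completion of the merged node is the best completion of the better of the two. An induction along the DP order then shows that each node stores the optimum over all assignments it represents, and the root, or the best leaf, gives the answer. Each DP step is slowed only by a polynomial factor, so the total time is $O^*(f(n))$. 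The step needing the most care is the merging argument, namely verifying that taking the maximum commutes with the merge. The first hypothesis is stated precisely so that this holds.

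For \textbf{(2)} we imitate the treewidth DP of Theorem~\ref{thm:algorithm-longest-path}, adapting the patchwork-based algorithm of Dabrowski et al. Each constraint of $\B$ is a quantifier-free Boolean combination of $\A$-relations. Whether such a constraint holds under an assignment is therefore determined by the atomic $\A$-relations the assignment induces on its scope, and that scope lies inside some bag. As states at a node $t$ we take complete atomic $\A$-networks on $\chi(t)$ that are satisfiable. Since $\A$ is finite and $\csp(\A)$ is decidable, there are at most $g(\tw)$ such states and they are computable. For a state $N$ we let $DP(t,N)$ be the maximum number of constraints assigned to nodes in $T_t$ that are satisfied by a satisfiable complete atomic network on $\chi(T_t)$ restricting to $N$. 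The transitions are the usual ones: at leaf and introduce nodes we add the constraints in $C_t$ satisfied by $N$; at forget nodes we maximise over the child states extending $N$; at join nodes we sum the two children.

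The main obstacle is correctness at join nodes, together with the claim that only local consistency needs checking. Suppose we have two satisfiable networks on $\chi(T_{t_1})$ and $\chi(T_{t_2})$ that agree on $\chi(t)$, which separates the two subtrees. We must show their union is satisfiable. This is exactly the patchwork property, applied to the two atomic networks. Because the networks are complete on each side, they agree on their common variables and patchwork applies. The final network at the root is then a union of locally satisfiable pieces glued along separators, so it is globally satisfiable, and it satisfies exactly the counted constraints. Conversely, any assignment restricts to valid states at every node, which gives optimality. The running time is $g(\tw)\cdot \|I\|^{O(1)}$, so $\maxcsp(\B)$ is \FPT\ parameterized by the primal treewidth.
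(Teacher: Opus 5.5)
Your proposal is correct and takes the same route as the paper. The paper only asserts that the search-tree DP and the patchwork-based treewidth DP of Dabrowski et al.\ carry over to $\maxcsp$ once consistency flags are replaced by counts of satisfied constraints, and you make this explicit in the style of Theorem~\ref{thm:algorithm-longest-path}. One step deserves an explicit line: patchwork is also needed at introduce nodes, to glue the optimal child network on $\chi(T_{t'})$ to the new state on $\chi(t)$ along $\chi(t')$; your closing ``glued along separators'' remark covers this only implicitly.
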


Consider Algorithm~2 from~\cite{ErikssonLagerkvist2023} that solves \csp(Allen) in time~$O^*((cn/\log n)^n)$ by generating all 
possible \textit{records} of a given instance. By altering the definition of a record slightly by including the constraints satisfied by the record in the record and altering the algorithm analogously to compute all possible records regardless of whether they contradict constraints, we construct an algorithm that computes \maxcsp(Allen) in time~$O^*((cn/\log n)^n)$.

Note, however, that in some instances we might need to reprove parts of the analysis that do not directly extend to \maxcsp. Consider e.g. Theorem~19 in~\cite{victorleifjohannaaai26} showing a $O^*((cn/\log n)^n)$ result for \csp(RCC-8). While the algorithm employs dynamic programming and is generalizable to \maxcsp, it reduces the problem to a tractable fragment of \csp(RCC-8) which is not proven to be tractable for \maxcsp. To extend this result, we would thus need to prove that the reduced fragment is also tractable for \maxcsp\ or at least solvable in $O^*((cn/\log n)^n)$.

\section{Concluding Remarks}
We studied the parameterized complexity of $\maxstp$. We began by giving a general \W{1}-hardness proof that extended to \emph{any} graph parameter. This necessitated a multi-parameterized with \emph{magnitude} being a promising starting point. Together with vertex cover size it can either be used directly for an \FPT\ algorithm, or, when combined with treewidth, for an \XP\ algorithm. 
Naturally, the map of the parameterized complexity landscape of \maxstp\ is not complete, and the complexity status of many interesting additional parametrizations is wide open. For example, can the \FPT\ algorithm for $\vc$ be extended to e.g.\ \emph{tree-depth}? If this is possible, \emph{path-width} would be a logical next step where it should be feasible to either extend the \FPT\ algorithm or to prove \W{1}-hardness (similar to treewidth). For \XP\ algorithms one could also investigate more general parameters such as \emph{clique-width} or \emph{twin-width}.

\section*{Acknowledgments}
Authors are given in alphabetical order.
The research of the first author was funded in whole or in part by
Excellence Center at Linköping -- Lund in Information Technology
(ELLIIT) funded by the Swedish government and the Wallenberg AI,
Autonomous Systems and Software Program (WASP) funded by the Knut and
Alice Wallenberg Foundation.
The third and fifth authors are partially supported  by the Swedish Research Council (VR) under grant 2021-04371. The fourth author is partially supported by VR under grant VR-2022-03214 and VR-2025-04487.

\bibliographystyle{abbrv}
\bibliography{references}

\end{document}